\tikzstyle{SimpleNetwork}=[draw,circle,minimum width=6pt]
\def \etal {\emph{et al.}}
\newtheorem{theorem}{Theorem}
\newtheorem{proposition}{Proposition}
\newtheorem{example}{Example}
\newtheorem{remark}{Remark}
\def \bs {\boldsymbol}
\def \mc {\mathcal}
\def\expt{\mathbb{E}}
\def\real{\mathbb{R}}
\def\naturals{\mathbb{N}}
\newcommand{\until}[1]{\{1,\dots, #1\}}
\newcommand{\setdef}[2]{\{#1 \; | \; #2\}}
\newcommand{\seqdef}[2]{\{#1\}_{#2}}
\newcommand{\slfrac}[2]{\left.#1\middle/#2\right.}
\newcommand{\lnn}[1]{%
	\ln  \left(#1\right)%
}
\newcommand{\expp}[1]{%
	\exp \! \left(#1\right)%
}
\newcommand\oprocendsymbol{\hbox{$\square$}}
\newcommand\oprocend{\relax\ifmmode\else\unskip\hfill\fi\oprocendsymbol}
\newcommand\bit[1]{\textit{\textbf{#1}}}
\title{On Distributed Cooperative Decision-Making in Multiarmed Bandits
\thanks{{This revision provides a correction to the original paper, which appeared in the Proceedings of the 2016 European Control Conference (ECC).  The second statement of Proposition 1, Theorem 1 and their proofs are new.  The new Theorem 1 is used to prove the regret bounds in Theorem 2.} } 
\thanks{This research has been supported in part by ONR grant  N00014-14-1-0635, ARO grant W911NF-14-1-0431, and by the Department of Defense (DoD) through the National Defense Science \& Engineering Graduate Fellowship (NDSEG) Program.}}
\author{Peter Landgren, Vaibhav Srivastava, and Naomi Ehrich Leonard
\thanks{P. Landgren, V. Srivastava and N. E. Leonard are with the Department of Mechanical and Aerospace Engineering, Princeton University, Princeton, NJ, USA,\tt{ \{landgren, vaibhavs, naomi\}@princeton.edu}.}}
\date{\today}
\begin{document}
\maketitle

\begin{abstract}
We study the explore-exploit tradeoff in distributed cooperative decision-making using the context of the multiarmed bandit (MAB) problem. For the distributed cooperative MAB problem, we design the cooperative UCB algorithm that comprises two interleaved distributed processes: (i) running consensus algorithms for estimation of rewards, and (ii) upper-confidence-bound-based heuristics for selection of arms.  We rigorously analyze the performance of the cooperative UCB algorithm and characterize the influence of communication graph structure on the decision-making performance of the group. 
\end{abstract}

\section{Introduction}

Cooperative decision-making under uncertainty is ubiquitous in natural systems as well as in engineering networks.  Typically in a distributed cooperative decision-making scenario, there is assimilation of information across a network followed by  decision-making based on the collective information.  The result is a kind of \emph{collective intelligence}, which is of fundamental interest both in terms of understanding natural systems and designing efficient engineered systems. 

A fundamental feature of decision-making under uncertainty is the \emph{explore-exploit} tradeoff. The explore-exploit tradeoff refers to the  tension between learning and optimizing:  the decision-making agent needs to learn the unknown system parameters (exploration), while maximizing its decision-making objective, which depends on the unknown parameters (exploitation).  

MAB problems are canonical formulations of the explore-exploit tradeoff. In a stochastic MAB problem a set of options (arms) are given. A stochastic reward with an unknown mean is associated with each option. A player can pick only one option at a time, and the player's objective is to maximize the cumulative expected reward over a sequence of choices. In an MAB problem, the player needs to balance the tradeoff between learning the mean reward at each arm (exploration), and picking the arm with maximum mean reward (exploitation). 

MAB problems are pervasive across a variety of scientific communities and have found application in diverse areas including controls and robotics~\cite{VS-PR-NEL:14},  ecology~\cite{JRK-AK-PT:78, VS-PR-NEL:13}, psychology~\cite{reverdy2014modeling}, and communications~\cite{lai2008medium, anandkumar2011distributed}. 
Despite the prevalence of the MAB problem, the research on MAB problems has primarily focused on policies for a single agent. The increasing importance of networked systems warrants the development of distributed algorithms for multiple communicating agents faced with MAB problems. In this paper, we  extend a popular single-agent algorithm for the stochastic MAB problem to the distributed multiple agent setting and analyze decision-making performance as a function of the network structure.

The MAB problem has been extensively studied (see~\cite{SB-NCB:12} for a survey). In their seminal work, Lai and Robbins~\cite{lai1985asymptotically} established a logarithmic lower bound on  the expected number of times a sub-optimal arm needs to be selected by an optimal policy. In another seminal work, Auer~\etal~\cite{PA-NCB-PF:02} developed the upper confidence bound (UCB) algorithm for the stochastic MAB problem that achieves the lower bound in~\cite{lai1985asymptotically} uniformly in time. 
Anantharam~\etal~\cite{VA-PV-JW:87} extended the results of~\cite{lai1985asymptotically} to the setting of multiple centralized players. 

Recently, researchers~\cite{kalathil2014decentralized, anandkumar2011distributed} have studied the MAB problem with multiple players in the decentralized setting. Primarily motivated by communication networks, these researchers assume no communication among agents and design efficient decentralized policies. 
Kar~\etal~\cite{kar2011bandit} investigated the multiagent MAB problem in a leader-follower setting. They designed efficient policies for  systems in which there is one major player that can access the rewards and the remaining minor players  can only observe the sampling patterns of the major player.   The MAB problem has also been used to perform empirical study of collaborative learning in social networks and the influence of network structure on decision-making performance of human agents~\cite{WM-DJW:12}. 


Here, we use a running consensus algorithm~\cite{braca2008enforcing} for assimilation of information, which is an extension of the classical DeGroot model~\cite{MHDG:74} in the social networks literature.  Running consensus and related models have been used to study learning~\cite{AJ-AS-ATS:10} and decision-making~\cite{VS-NEL:13f} in social networks. 

In the present paper we study the distributed cooperative MAB problem in which a set of agents are faced with a stochastic MAB problem and communicate their information with their neighbors in an undirected, connected communication graph. We use a set of running consensus algorithms for cooperative estimation of the mean reward at each arm, and we design an arm selection heuristic that leads to an order-optimal performance for the group. The major contributions of this paper are as follows.

First, we employ and rigorously analyze running consensus algorithms for distributed cooperative estimation of mean reward at each arm, and we derive bounds { on several relevant quantities. }

Second, we propose and thoroughly analyze the cooperative UCB algorithm. We derive bounds on decision-making performance for the group and characterize the influence of the network structure on the performance. 

Third, we introduce a novel graph centrality measure and numerically demonstrate that this measure captures the ordering of explore-exploit performance of each agent.

The remainder of the paper is organized as follows.  In Section~\ref{Background} we recall some preliminaries about the stochastic MAB problem and consensus algorithms.  In Section~\ref{sec:coop-est} we present and analyze the cooperative estimation algorithm. We propose and analyze the cooperative UCB algorithm in Section~\ref{DistributedDecisionMaking}. We illustrate our analytic results with numerical examples in Section~\ref{NetworkPerformanceAnalysis}. We conclude in Section~\ref{FinalRemarks}.

\section{Background}
\label{Background}
In this section we recall the standard MAB problem, the UCB algorithm, and some preliminaries on discrete-time consensus. 

\subsection{The Single Agent MAB Problem}  

Consider an $N$-armed bandit problem, i.e., an MAB problem with $N$ arms. 
The reward associated with arm $i\in \until{N}$ is a random variable with an unknown mean $m_i$. 
Let  the agent choose arm $i(t)$ at time $t \in \until{T}$ and receive a reward $r(t)$. 
The decision-maker's objective  is to choose a sequence of arms $\seqdef{i(t)}{t\in \until{T}}$ that maximizes the expected cumulative reward $\sum_{t=1}^T m_{i(t)}$, where $T$ is the horizon length of the sequential allocation process.

For an MAB problem, the expected \emph{regret} at time $t$ is defined by $R(t) = m_{i^*}-m_{i(t)}$, where $m_{i^*} = \max \setdef{m_i}{i\in\until{N}}$. The objective of the decision-maker can be equivalently defined as minimizing the expected cumulative regret defined by $\sum_{t=1}^T R(t) = \sum_{i=1}^N \Delta_i \expt[n_{i}(T)]$, 
where $n_{i}(T)$ is the cumulative number of times arm $i$ has been chosen until time $T$ and $\Delta_i = m_{i^*}-m_i$ is the expected regret due to picking arm $i$ instead of arm $i^*$. It is known that the regret of any algorithm for an MAB problem is asymptotically lower bounded by a logarithmic function of the horizon length $T$~\cite{lai1985asymptotically}, i.e., no algorithm can achieve an expected cumulative regret smaller than a logarithmic function of horizon length as $T \to \infty$. 

In this paper, we focus on Gaussian rewards, i.e., the reward at arm $i$ is sampled from a Gaussian distribution with mean $m_i$ and variance $\sigma^2$.  We assume that the variance $\sigma^2$ is known and is the same at each arm.

\subsection{The UCB Algorithm}

A popular solution to the stochastic MAB problem is the UCB algorithm proposed in~\cite{PA-NCB-PF:02}. The UCB algorithm initializes by sampling each arm once, and then selects an arm with maximum 
\begin{equation*}
Q_i(t) = \hat{\mu}_i(t) + C_i(t), 
\end{equation*}
where $n_i(t)$ is the number of times arm $i$ has been chosen up to and including time $t$, and $\hat{\mu}_i(t)$ and $C_i(t) = \sqrt{\frac{2 \ln(t)}{n_i(t)}}$ are the empirical mean reward of arm $i$ and the associated measure of the uncertainty associated with that mean at time $t$, respectively. 

The function $Q_i(t)$ is judiciously designed to balance the tradeoff between explore and exploit:  the terms $\hat \mu_i(t)$ and $C_i(t)$ facilitate exploitation and exploration, respectively. The UCB algorithm as described above assumes that rewards have a bounded support $[0,1]$, but this algorithm can be easily extended to distributions with unbounded support~\cite{KL-QZ:11}.

%
%

\subsection{The Cooperative MAB Problem}
The cooperative MAB problem is an extension of the single-agent MAB problem where $M$ agents act over the same $N$ arms.  Agents maintain bidirectional communication, and the communication network can be modeled as an undirected graph $\mathcal{G}$ in which each node represents an agent and edges represent the communication between agents~\cite{FB-JC-SM:09}. Let $A\in \real^{M\times M}$ be the adjacency matrix associated with $\mc G$ and let $L \in \real^{M\times M}$ be the corresponding Laplacian matrix. We assume that the graph $\mc G$ is connected, i.e., there exists a path between each pair of nodes.

In the cooperative setting, the objective of the group is defined as minimizing the expected cumulative group regret, defined by $\sum_{k=1}^M \sum_{t=1}^T R^k(t) = \sum_{k=1}^M \sum_{i=1}^N \Delta_i \expt[n_{i}^k(T)]$, 
where $R^k(t)$ is the regret of agent $k$ at time $t$ and $n_{i}^k(T)$ is the total cumulative number of times arm $i$ has been chosen by agent $k$ until time $T$.  In the cooperative setting using Gaussian rewards the lower bound~\cite{VA-PV-JW:87} on the expected number of times a suboptimal arm $i$ is selected by a fusion center that has access  to reward for each agent is
\begin{equation}
\sum_{k=1}^M \expt[n_i^k(T)] \ge \Big( \frac{2 \sigma^2}{\Delta_i^2} +o(1) \Big) \ln T. \label{eqn:fusioncenterregret}
\end{equation}
In the following, we will design a distributed algorithm that samples a suboptimal arm $i$ within a constant factor of the above bound. 

\subsection{Discrete-Time Consensus }




Consider a set of agents $\until{M}$, each of which maintains bidirectional communication with a set of neighboring agents.  The objective of the consensus algorithms is to ensure agreement among agents on a common value. 
%
In the discrete-time consensus algorithm~\cite{AJ-JL-ASM:02, JNT:84}, agents average their opinion with their neighbors' opinions at each time.  A discrete-time consensus algorithm can be expressed as 
\begin{equation}
\mathbf{x}(t) = P\mathbf{x}(t-1),  \label{UpdateEqn}
\end{equation}
where $\mathbf{x}(t)$ is the vector of each agent's opinion, and $P$ is a row stochastic matrix given by
\begin{equation} 
P= \mathcal{I}_M - \frac{\kappa}{d_{\text{max}}} L. \label{Pdefn}
\end{equation}
$\mathcal{I}_M$ is the identity matrix of order $M$, $\kappa \in (0,1]$ is a step size parameter \cite{olfati2004consensus}, $d_{\text{max}} =\max \setdef{\text{deg}(i)}{i \in \until{M}}$, and $\text{deg}(i)$ is the degree of node $i$.  
In the following, we assume without loss of generality that the eigenvalues of $P$ are ordered such that $\lambda_1 = 1 > \lambda_2 \geq ... \geq \lambda_M > -1$. 

In the context of social networks, the consensus algorithm~\eqref{UpdateEqn} is referred to as the Degroot model~\cite{MHDG:74} and has been successfully used to describe evolution of opinions~\cite{BG-MOJ:10}. 

One drawback of the consensus algorithm~\eqref{UpdateEqn} is that it does not allow for incorporating new external information. This drawback can be mitigated by adding a forcing term and the resulting algorithm is called the \emph{running consensus}~\cite{braca2008enforcing}. 
Similar to~\eqref{UpdateEqn}, the running consensus updates the opinion at time $t$ as
\begin{equation}\label{eq:running-consensus}
\mathbf{x}(t) = P\mathbf{x}(t-1) + P \bs \upsilon(t),
\end{equation}
where $\bs \upsilon(t)$ is the information received at time $t$. In the running consensus update~\eqref{eq:running-consensus}, each agent $k$ collects information $\upsilon_k(t)$ at time $t$, adds it to its current opinion, and then averages its updated opinion with the updated opinion of its neighbors.


\section{Cooperative Estimation of Mean Rewards} \label{sec:coop-est}
In this section we investigate the cooperative estimation of mean rewards at each 
arm. To this end, we propose two running consensus algorithms for each arm and analyze their performance.

\subsection{Cooperative Estimation Algorithm}
For distributed cooperative estimation of the mean reward at each arm $i$, we employ two running consensus algorithms: (i) for estimation of total reward provided at the arm, and (ii) for estimation of the total number of times the arm has been sampled. 

Let $\hat{s}_i^k(t)$ and $\hat{n}_i^k(t)$ be agent $k$'s estimate of the total reward provided at arm $i$ per unit agent and the total number of times arm $i$ has been selected until time $t$ per unit agent, respectively. Using $\hat{s}_i^k(t) $ and  $\hat{n}_i^k(t) $ agent $k$ can calculate $\hat{\mu}_i^k(t)$, the estimated empirical mean of arm $i$ at time $t$ defined by
\begin{equation}
\hat{ \mu}_i^{k}(t) = \frac{\hat{s}_i^{k}(t)}{\hat{n}_i^{k}(t)}.  \label{eqnmean}
\end{equation}


%


Let $i^k(t)$ be the arm sampled by agent $k$ at time $t$ and let $\xi_i^k(t) = \mathds{1} (i^k(t) =i)$.   $\mathds{1}(\cdot)$ is the indicator function, here equal to 1 if $i^k(t)= i$ and 0 otherwise. 
{For simplicity of notation we define $r_i^k(t)$ as the realized reward at arm $i$ for agent $k$, which is a random variable sampled from $\mathcal{N}(m_i,\sigma^2)$, and the corresponding accumulated reward is $r^k(t) = r_i^k(t) \cdot \mathds{1} (i^k(t) =i)$.}

The estimates $\hat{s}_i^k(t)$ and $\hat{n}_i^k(t)$ are updated using running consensus as follows
\begin{align} \label{nhatdefn}
\mathbf{\hat{n}}_i(t) &= P \mathbf{\hat{n}}_i(t-1) + P\boldsymbol{\xi}_i(t), \\
\text{and} \quad \mathbf{\hat{s}}_i(t) &= P \mathbf{\hat{s}}_i(t-1) + { P(\mathbf{r}_i(t) \circ \boldsymbol{\xi}_i(t)),} \label{shatdefn}
\end{align}
where $\mathbf{\hat{ n}}_i(t)$, $\mathbf{\hat{ s}}_i(t)$, $\boldsymbol{\xi}_i(t)$, and $\mathbf{r}_i(t)$ are vectors of $\hat n_i^k(t)$, $\hat s_i^k(t)$, $\xi_i^k(t)$, and $r_i^k(t)$, $k\in \until{M}$, respectively, { and $\circ$ denotes element-wise multiplication (Hadamard product).}


%
%

\subsection{Analysis of the Cooperative Estimation Algorithm}
We now analyze the performance of the estimation algorithm defined by~\eqref{eqnmean},~\eqref{nhatdefn}~and~\eqref{shatdefn}. Let $n_i^{\text{cent}}(t) \equiv \frac{1}{M} \sum_{\tau=1}^t \mathbf{1}_M^\top \boldsymbol{\xi}_i(\tau)$ be the total number of times arm $i$ has been selected per unit agent up to and including time $t$, and let $s_i^{\text{cent}}(t) \equiv \frac{1}{M} \sum_{\tau=1}^t \boldsymbol{\xi}_i^\top(t) \mathbf{r}_i(t)$ be the
total reward provided at arm $i$ per unit agent up to and including time $t$.  Also, let $\lambda_i$ denote the $i$-th largest eigenvalue of $P$, $\mathbf{u}_i$  the eigenvector corresponding to $\lambda_i$, $u_i^d$  the $d$-th entry of $\mathbf{u}_i$, and
\begin{equation}
\epsilon_n =  \sqrt{M}  \sum_{p=2}^M \frac{|\lambda_p|}{1-|\lambda_p|}.  \label{epsilonndef}
\end{equation}
Note that $\lambda_1=1$ and $\mathbf{u}_1 = \mathbf{1}_M/\sqrt{M}$. Let us define
\begin{align*}
 \nu_{pj}^{\text{+sum}} &= \sum_{d=1}^M  u_p^d u_j^d \mathds{1}(u_p^k u_j^k \geq 0) \\
 \text{and} \quad \nu_{pj}^{\text{-sum}} &=  \sum_{d=1}^M u_p^d u_j^d \mathds{1}(u_p^k u_j^k \leq 0). 
\end{align*}
We also define
\begin{equation}
a_{pj}(k) =  \begin{cases}
\nu_{pj}^{\text{+sum}}u_p^k u_j^k, & 
\! \! \! \! \text{if } \lambda_p \lambda_j \geq 0 \; \&\; u_p^k u_j^k  \geq 0, \\
\nu_{pj}^{\text{-sum}}u_p^k u_j^k, &
\! \! \! \! \text{if }  \lambda_p \lambda_j \geq 0 \; \&\; u_p^k u_j^k  \leq 0, \\
\nu_{pj}^{\text{max}} |u_p^k u_j^k | , & \! \! \! \! \text{if }  \lambda_p \lambda_j < 0,
\end{cases}\label{apjdefn}
\end{equation}
where $\nu_{pj}^{\text{max}} = \max{\{ |\nu_{pj}^{\text{-sum}}|,  \nu_{pj}^{\text{+sum}}\}} $. Furthermore, let 
\begin{equation}\label{eq:epsilon-c}
\epsilon_c^k   =    M \sum_{p=1}^M \sum_{j=2}^M \frac{|\lambda_p \lambda_j| }{1-|\lambda_p \lambda_j|} a_{pj}(k).
\end{equation}

We note that both $\epsilon_n$ and $\epsilon_c^k$ depend only on the topology of the communication graph.  These are measures of distributed cooperative estimation performance.




%


\begin{proposition}[\bit{Performance of cooperative estimation}]\label{prop:coop-est}
For the distributed estimation algorithm defined in~\eqref{eqnmean},~\eqref{nhatdefn}~and~\eqref{shatdefn}, and a doubly stochastic matrix $P$ defined in~\eqref{Pdefn}, the following statements hold
\begin{enumerate}
\item the estimate $\hat n_i^k(t)$ satisfies
\begin{align*}
n_i^{\text{cent}}(t) - \epsilon_n \le  \hat{n}_i^k(t) \le   n_i^{\text{cent}}(t) + \epsilon_n;
\end{align*}
\item { the following inequality holds for the estimate   $\hat {n}_i^k(t)$ and the sequence $\seqdef{\xi_i^j(\tau)}{\tau\in \until{t}}$,$ j \in \until{M}$
\begin{equation*}
\sum_{\tau=1}^{t} \sum_{j=1}^M \left(\sum_{p=1}^M \lambda_p^{t-\tau+1}  u_p^k u_p^j  \right)^2 \xi_i^j(\tau) \leq \frac{ \hat n_i^k(t) + \epsilon_c^k}{M}.
\end{equation*}}
\end{enumerate}
\end{proposition}
\begin{proof}
We begin with the first statement. 
From~\eqref{nhatdefn} it follows that
\begin{align}
\mathbf{\hat{n}}_i(t) & =P^t \mathbf{\hat{n}}_i(0) + \sum_{\tau = 1}^{t} P^{t-\tau+1} \boldsymbol{\xi}_i(\tau) \nonumber \\
&=\sum_{\tau = 0}^{t}  \Big[ \frac{1}{M} \mathbf{1}_M \mathbf{1}_M^\top \boldsymbol{\xi}_i(\tau) + \sum_{p=2}^{M} \lambda_p^{t-\tau+1} \mathbf{u}_p \mathbf{u}_p^\top \boldsymbol{\xi}_i(\tau) \Big] \nonumber\\
&=  n_i^{\text{cent}}(t) \mathbf{1}_M + \sum_{\tau = 1}^{t} \sum_{p=2}^{M} \lambda_p^{t-\tau+1} \mathbf{u}_p \mathbf{u}_p^\top \boldsymbol{\xi}_i(\tau). \label{solndecomp}
\end{align}
%
We now bound the $k$-th entry of the second term on the right hand side of~\eqref{solndecomp}:
\begin{align}
\sum_{\tau = 1}^{t} \sum_{p=2}^M  \lambda_p^{t-\tau+1}  \big( \mathbf{u}_p \mathbf{u}_p^\top \boldsymbol{\xi}_i(\tau) \big)_k  \!
&\leq \!   \sum_{\tau = 1}^{t} \sum_{p=2}^M |\lambda_p^{t-\tau+1}| \| \mathbf{u}_p\|_2^2   \| \boldsymbol{\xi}_i(\tau) \|_2  \nonumber \\
& \leq   \sqrt{M} \sum_{\tau = 1}^{t} \sum_{p=2}^M  |\lambda_p^{t-\tau+1}| \le \epsilon_n.  \nonumber
\end{align}
This establishes the first statement. 

{
To prove the second statement, we note that 
\begin{align}
\sum_{\tau=1}^{t} &\sum_{j=1}^M \left(\sum_{p=1}^M \lambda_p^{t-\tau+1}  u_p^k u_p^j  \right)^2 \xi_i^j(\tau) \nonumber \\
& = \sum_{\tau=1}^t \sum_{p=1}^M \sum_{w=1}^M (\lambda_p \lambda_w)^{t-\tau+1} u_p^k u_w^k \sum_{j=1}^M u_p^j u_w^j \xi_i^j(\tau) \nonumber \\
& = \sum_{\tau=1}^t \sum_{p=1}^M \sum_{w=2}^M (\lambda_p \lambda_w)^{t-\tau+1} u_p^k u_w^k \nu_{pwi}(\tau) \nonumber \\
& \quad \quad + \frac{1}{M} \sum_{\tau=1}^t \sum_{p=1}^M  \sum_{j=1}^M \lambda_p^{t-\tau+1} u_p^k u_p^j \xi_i^j(\tau) \nonumber \\
& = \sum_{\tau=1}^t \sum_{p=1}^M \sum_{w=2}^M (\lambda_p \lambda_w)^{t-\tau+1} u_p^k u_w^k \nu_{pwi}(\tau) +\frac{1}{M} \hat{n}_i^k(t), \label{covssoln}
\end{align}
where $\nu_{pwi}(\tau) \!=\! \sum_{j=1}^M u_p^j u_w^j \xi_i^j(\tau)$.

We now analyze the first term of \eqref{covssoln}:
\begin{align}
& \sum_{\tau=1}^t \sum_{p=1}^M \sum_{w=2}^M (\lambda_p \lambda_w)^{t-\tau+1} u_p^k u_w^k \nu_{pwi}(\tau) \nonumber \\ 
&  \leq \sum_{\tau=1}^t \sum_{p=1}^M \sum_{w=2}^M |(\lambda_p \lambda_w)^{t-\tau+1} || u_p^k u_w^k \nu_{pwi}(\tau) | \nonumber \\ 
& \leq  \sum_{\tau = 0}^{t-1}  \sum_{p=1}^M \sum_{w=2}^M |\lambda_p \lambda_w|^{t-\tau+1} a_{pw}(k) \nonumber \\ 
&\leq   \sum_{p=1}^M \sum_{w=2}^M \frac{|\lambda_p \lambda_w|}{1-|\lambda_p \lambda_w|}  a_{pw}(k). \label{cov1stmid}
\end{align}
Bounds in \eqref{cov1stmid} establish the second statement. 
}
\end{proof}

{
We now  derive bounds on the deviation of the estimated mean when using the cooperative estimation algorithm.  {We use techniques from \cite{AG-EM:08}.}  Recall that for $i\in \{1,\dots,N\}$ and $k\in \{1,\dots,M\}$ {we} let $\seqdef{r_i^k(t)}{t \in \naturals}$  be the sequence of i.i.d. Gaussian random variables with mean $m_i \in \real$.   Let $\mathcal{F}_t$ be the filtration defined by the sigma-algebra of all the measurements until time $t$. 
Let $\seqdef{\xi_i^k(t)}{t \in \naturals}$  be a sequence of Bernoulli variables such that $\xi_i^k(t)$ is deterministically known given $\mc F_{t-1}$, i.e., $\xi_i^k(t)$ is pre-visible w.r.t. $\mc F_{t-1}$.
Additionally, let $\phi_i(\beta) = \lnn{\mathbb{E}[\expp{\beta r_i^k(t)}]}$ denote the cumulant generating function of $r_i^k(t)$.  

\begin{theorem}[\bit{Estimator Deviation Bounds}] \label{Thm:EstDevBoudnsCondensed}
	For the estimates 	$\hat{s}_i^k(t)$ and $\hat{n}_i^k(t)$ obtained using equations~\eqref{nhatdefn} and \eqref{shatdefn}, the following concentration inequality holds 
	\begin{equation}
		\mathbb{P}\Bigg( \! \frac{\hat{s}_i^k(t) \!-\! m_i \hat{n}_i^k(t)}{(\frac{1}{M} \left(\hat{n}_i^k(t) \!+\! \epsilon_c^k\right))^{\slfrac{1}{2}}} \! > \! \delta \! \Bigg) \!<\! \Bigg\lceil \!\frac{\lnn{t \!+\! \epsilon_n}}{\lnn{1\!+\!\eta}}\! \Bigg\rceil  \!\expp{\frac{-\delta^2}{2\sigma^2} G(\eta) \! \! } ,
	\end{equation}
	where 	$\delta > 0$, $\eta >0$, $G(\eta) = (1-\frac{\eta^2}{16})$, and $\epsilon_c^k$ and $\epsilon_n$ are defined in~\eqref{eq:epsilon-c} and~\eqref{epsilonndef}, respectively. 
\end{theorem}
\begin{proof}
	We begin by noting that $\hat{s}_i^k(t)$ can be decomposed as 
	\begin{equation}
		\hat{s}_i^k(t) = \sum_{\tau=1}^{t} \sum_{p=1}^{M} \lambda_p^{t-\tau+1} \sum_{j=1}^M u_p^k u_p^j r_i^j(\tau) \xi_i^j(\tau). \label{eqn:s_i_hat_defn}
	\end{equation}
	Let $\hat{s}_i^{kp}(t) =  \sum_{\tau=1}^{t} \lambda_p^{t-\tau+1} \sum_{j=1}^M u_p^k u_p^j r_i^j(\tau) \xi_i^j(\tau)$.  Then, 
	\begin{align}
		\sum_{p=1}^M \hat{s}_i^{kp}(t)  =   \sum_{p=1}^{M} \sum_{j=1}^M \lambda_p u_p^k u_p^j r_i^j(t) \xi_i^j(t) + \sum_{p=1}^M \lambda_p \hat{s}_i^{kp}(t-1). \label{eqn:shat_modaldecomp}
	\end{align}
	%
	It follows from \eqref{eqn:s_i_hat_defn} and \eqref{eqn:shat_modaldecomp} that for any $\Theta > 0$
	\begin{align*}
		\mathbb{E}& \left[\expp{\Theta \hat{s}_i^k(t)} \middle \vert \mc F_{t-1} \right] =
		\mathbb{E}\left[\expp{\Theta \sum_{p=1}^M \hat{s}_i^{kp}(t)} \middle \vert \mc F_{t-1} \right] 	
		\\
		& = \mathbb{E} \left[ \expp{\Theta \sum_{p=1}^M \lambda_p \sum_{j=1}^M u_p^k u_p^j r_i^j(t) \xi_i^j(t)}  \middle \vert \mc F_{t-1} \right] \\
		& \qquad \qquad \times \expp{\Theta \sum_{p=1}^M \lambda_p \hat{s}_i^{kp}(t-1)}\\
		& = \expp{\sum_{j=1}^M \phi_i \left( \Theta \sum_{p=1}^M \lambda_p u_p^k u_p^j r_i^j(t)  \right) \xi_i^j(t) } \\
		& \qquad \qquad \times \expp{\Theta \sum_{p=1}^M \lambda_p \hat{s}_i^{kp}(t-1)}, 
	\end{align*}
	where the last line follows using the fact that conditioned on $\mc F_{t-1}$, $ \xi_i^j(t)$ is a deterministic variable and $r_i^j(t)$ are i.i.d. for each $j \in \until{M}$. Therefore, it follows that 
	\begin{multline*}
		\mathbb{E} \left[\exp \bigg(\Theta \sum_{p=1}^M \hat{s}_i^{kp}(t) - \sum_{j=1}^M \phi_i \left( \Theta \sum_{p=1}^M \lambda_p u_p^k u_p^j r_i^j(t)  \right)\right. \\ 
		\times	 \xi_i^j(t)\bigg)
		\left.	\bigg| \mc F_{t-1} \right] 
		= \expp{\Theta \sum_{p=1}^M \lambda_p \hat{s}_i^{kp}(t-1)}. 
	\end{multline*}
	Using the above argument recursively with the fact that $ s_i^k(0)=0$, we obtain
	\begin{multline*}
		\mathbb{E} \Bigg[\exp \Bigg( \Theta  \hat{s}_i^k(t)  \\
		- \sum_{\tau=1}^t \sum_{j=1}^M \phi_i \left(\Theta \sum_{p=1}^M \lambda_p^{t-\tau+1} u_p^k u_p^j r_i^j(\tau) \right) 
		\xi_i^j(\tau) \Bigg ) \Bigg] =1. 
	\end{multline*}
	Since for 	Gaussian random variables $\phi_i(\beta) = \beta m_i + \frac{1}{2}\sigma^2 \beta^2$, we have
	\begin{align*}
		1 &=\mathbb{E} \Bigg [ \! \exp \Bigg( \Theta \!\left(\hat{s}_i^k\!(t) \!-\! m_i \hat{n}_i^k\!(t)\right) \\
		& \qquad \qquad - \frac{\sigma^2}{2}\! \sum_{\tau=1}^{t} \sum_{j=1}^M \!\left(\! \Theta\! \sum_{p=1}^M \lambda_p^{t-\tau+1} \! u_p^k u_p^j  \right)^2  \! \! \xi_i^j(\tau) \Bigg) \Bigg ] \\
		&\ge \mathbb{E} \Bigg [ \expp{ \! \Theta \!\left(\hat{s}_i^k\!(t) \!-\! m_i \hat{n}_i^k\!(t)\right) \!-\! \frac{\sigma^2 \Theta^2}{2M} \left(\hat{n}_i^k(t) + \epsilon_c^k \right)}\Bigg ],
	\end{align*}
	where the last inequality follows from the second statement of Proposition~\ref{prop:coop-est}. 	
	Now using the Markov Inequality, we obtain
	\begin{align}
		&e^{-a} \! \geq \mathbb{P} \! \left( \! \expp{ \! \Theta \!\left(\hat{s}_i^k\!(t) \!-\! m_i \hat{n}_i^k\!(t)\right) \!-\! \frac{\sigma^2 \Theta^2}{2M} \left(\hat{n}_i^k(t) + \epsilon_c^k \right)} \! \! \geq \! e^{a} \! \right)  \nonumber\\
		&= \mathbb{P} \Bigg( \frac{\hat{s}_i^k(t) - m_i \hat{n}_i^k(t)}{\left( \frac{1}{M}\left(\hat{n}_i^k(t) + \epsilon_c^k \right)  \right)^{\frac{1}{2}}}  \geq  \frac{a}{\Theta} \left( \frac{1}{M}\left(\hat{n}_i^k(t) + \epsilon_c^k \right)  \right)^{-\frac{1}{2}} \nonumber \\
		& \qquad \qquad \qquad \qquad  + \frac{\sigma^2 \Theta}{2}\left( \frac{1}{M}\left(\hat{n}_i^k(t) + \epsilon_c^k \right)  \right)^{\frac{1}{2}} \Bigg ). \label{eq:ineq1}
	\end{align}
	The right hand side of the above {inequality} contains a random variable $\hat n_i^k(t)$ which is dependent on the random variable on the left hand side. 	Therefore, we use union bounds on $\hat n_i^k(t)$ to obtain the desired concentration inequality. Towards this end, we consider an exponentially increasing sequence of time indices $\setdef{(1+\eta)^{g-1}}{g \in \until{D}}$, where $D = \left \lceil \frac{\lnn{t+ \epsilon_n}}{\lnn{1 + \eta}} \right \rceil$ and $\eta >0$. 
	%
	For every $g \in \{1,\dots,D\}$, define
	\begin{equation}
		\Theta_g = \frac{1}{\sigma}  \sqrt{\frac{2a M}{(1+\eta)^{g-\frac{1}{2}} + \epsilon_c^k}}.
	\end{equation}
	Thus, if $(1+\eta)^{g-1} \leq \hat{n}_i^k(t) \leq (1+\eta)^{g}$, then
	\begin{align}
		&\frac{a}{\Theta_g} \left( \frac{1}{M}\left(\hat{n}_i^k(t) + \epsilon_c^k \right)  \right)^{-\frac{1}{2}}  + \frac{\sigma^2 \Theta_g}{2}\left( \frac{1}{M}\left(\hat{n}_i^k(t) + \epsilon_c^k \right)  \right)^{\frac{1}{2}} \nonumber\\
		& = \sigma \sqrt{\frac{a}{2}} \left(\! \left( \frac{(1+ \eta)^{g-\frac{1}{2}} + \epsilon_c^k }{\hat{n}_i^k(t) + \epsilon_c^k} \right)^{\frac{1}{2}} \!\! + \left( \frac{\hat{n}_i^k(t) + \epsilon_c^k}{(1+ \eta)^{g-\frac{1}{2}} + \epsilon_c^k } \right)^{\frac{1}{2}}   \! \right) \nonumber \\
		& \le \sigma \sqrt{\frac{a}{2}} \left(\! \left( \frac{(1+ \eta)^{g-\frac{1}{2}} }{\hat{n}_i^k(t) } \right)^{\frac{1}{2}} \!\! + \left( \frac{\hat{n}_i^k(t) }{(1+ \eta)^{g-\frac{1}{2}}} \right)^{\frac{1}{2}}   \! \right) \nonumber \\
		&\leq \sigma \sqrt{\frac{a}{2}} \left(\! (1+\eta)^{\frac{1}{4}} +  (1+\eta)^{-\frac{1}{4}}   \! \right),  \label{eq:ineq2}
	\end{align}
	where the second-to-last inequality follows from the fact that for $a,b >0$, the function $\epsilon \mapsto \sqrt{\frac{a + \epsilon}{b+\epsilon}} + \sqrt{\frac{b + \epsilon}{a+\epsilon}}$ with domain $\real_{\ge 0}$ is monotonically non-increasing, and
	the last inequality follows from the fact the for $\eta>0$, the function $ x \mapsto \sqrt{\frac{(1+ \eta)^{g-\frac{1}{2}}}{x}} \ + \sqrt{\frac{x}{(1+ \eta)^{g-\frac{1}{2}}}} $ with domain $ [(1+\eta)^{g-1}, (1+\eta)^g]$ achieves its maximum at either of the boundaries. 
	%
	
{	
It follows from~\eqref{eq:ineq1} that	
	\begin{align*}
		&e^{-a} \! \geq \mathbb{P} \Bigg( \frac{\hat{s}_i^k(t) - m_i \hat{n}_i^k(t)}{\left(\frac{1}{M} \left(\hat{n}_i^k(t) + \epsilon_c^k\right)\right)^{\frac{1}{2}}} \! > \! \frac{a}{\Theta_g}\left(\frac{1}{M} \left(\hat{n}_i^k(t) + \epsilon_c^k\right)\right)^{-\frac{1}{2}}  \\
		&\qquad \qquad +\frac{\sigma^2 \Theta_g}{2} \left(\frac{1}{M} \left(\hat{n}_i^k(t) + \epsilon_c^k\right)\right)^{\frac{1}{2}}\\
		&\qquad \quad   \; \&\; (1+\eta)^{g-1} \le \hat n_i^k(t) + \epsilon_c^k< (1+\eta)^g   \Bigg),
	\end{align*}
for any $g \in \until{D}$. 	
	Therefore,
	\begin{align*}
&D e^{-a}	\ge \sum_{g=1}^D \mathbb{P} \Bigg( \frac{\hat{s}_i^k(t) - m_i \hat{n}_i^k(t)}{\left(\frac{1}{M} \left(\hat{n}_i^k(t) + \epsilon_c^k\right)\right)^{\frac{1}{2}}} \! > \! \frac{a}{\Theta_g}\left(\frac{1}{M} \left(\hat{n}_i^k(t) + \epsilon_c^k\right)\right)^{-\frac{1}{2}} \\
		&\qquad \qquad +\frac{\sigma^2 \Theta_g}{2} \left(\frac{1}{M} \left(\hat{n}_i^k(t) + \epsilon_c^k\right)\right)^{\frac{1}{2}}\\
		&\qquad \quad   \; \&\; (1+\eta)^{g-1} \le \hat n_i^k(t) + \epsilon_c^k< (1+\eta)^g   \Bigg) \\		
		& \ge \mathbb{P} \left( \frac{\hat{s}_i^k(t) - m_i \hat{n}_i^k(t)}{\left(\frac{1}{M} \left(\hat{n}_i^k(t) + \epsilon_c^k\right)\right)^{\frac{1}{2}}} > \sigma \sqrt{\frac{a}{2}} \left( (1+\eta)^{\frac{1}{4}} + (1+\eta)^{-\frac{1}{4}}   \right)  \right),
	\end{align*}
where the last inequality follows from inequality~\eqref{eq:ineq2}. 	}
	Setting $\sigma \sqrt{\frac{a}{2}} \left( (1+\eta)^{\frac{1}{4}} + (1+\eta)^{-\frac{1}{4}} \right) = \delta$, this yields 
	\begin{align*}
		&\mathbb{P} \left( \frac{\hat{s}_i^k(t) - m_i \hat{n}_i^k(t)}{\left(\frac{1}{M} \left(\hat{n}_i^k(t) + \epsilon_c^k\right)\right)^{\frac{1}{2}}} > \delta \right) \\
		& \qquad \leq D \expp{\frac{-2 \delta^2}{\sigma^2 \left( (1+\eta)^{\frac{1}{4}} + (1+\eta)^{-\frac{1}{4}} \right)^2 }}.
	\end{align*}
	It can be verified that the first three terms in the Taylor series for $\frac{4}{\left( (1+\eta)^{\frac{1}{4}} + (1+\eta)^{-\frac{1}{4}} \right)^2}$ provide a lower bound, i.e., 
	\begin{equation*}
		\frac{4}{\left( (1+\eta)^{\frac{1}{4}} + (1+\eta)^{-\frac{1}{4}} \right)^2} \geq 1-\frac{\eta^2}{16}.
	\end{equation*}
	Therefore,	it holds that 
	\begin{align*}
		&\mathbb{P} \left( \frac{\hat{s}_i^k(t) - m_i \hat{n}_i^k(t)}{\left(\frac{1}{M} \left(\hat{n}_i^k(t) + \epsilon_c^k\right)\right)^{\frac{1}{2}}} > \delta \right) \leq D \expp{\frac{- \delta^2}{2\sigma^2}\left( 1-\frac{\eta^2}{16} \right) } \\
		& \qquad \qquad  = \Bigg\lceil \frac{\lnn{t + \epsilon_n}}{\lnn{1+\eta}} \Bigg\rceil  \expp{\frac{- \delta^2}{2\sigma^2}\left( 1-\frac{\eta^2}{16} \right) }.
	\end{align*}
\end{proof}
}

\section{Cooperative Decision-Making}
\label{DistributedDecisionMaking}
In this section, we extend the UCB algorithm~\cite{PA-NCB-PF:02} to the distributed cooperative setting in which multiple agents can communicate with each other according to a given graph topology. Intuitively, compared to the single agent setting,  in the cooperative setting each agent will be able to perform better due to communication with neighbors.  However,  the extent of an agent's performance advantage depends on the network structure. We compute bounds on the performance of the group in terms of the expected group cumulative regret. We also propose a metric that orders the contribution of agents to the cumulative group regret in terms of their location in the graph.

\subsection{Cooperative UCB Algorithm}
The  cooperative UCB algorithm is analogous to the UCB algorithm, and uses a modified decision-making heuristic that captures the effect of the additional information an agent receives through communication with other agents as well as the rate of information propagation through the network. 

The cooperative UCB algorithm is initialized by each agent sampling each arm once and proceeds as follows. At each time $t$ each agent $k$ selects the arm with maximum $Q_i^{k}(t-1) = \hat{\mu}_i^{k}(t-1) + C_i^k(t-1)$, { where
%
\begin{equation}
C_i^k(t-1) = \sigma \;\sqrt[]{\frac{2 \gamma}{G(\eta)} \cdot \frac{\hat{n}_i^{k}(t-1) +  \epsilon_c^k}{M\hat{n}_i^{k}(t-1)}\cdot\frac{ \lnn{t-1}}{\hat{n}_i^{k}(t-1)}}, \label{Cdefn}
\end{equation}
and receives realized reward $r_i^k(t)$, where $\gamma > 1$, $G(\eta) = (1-{\eta^2}{16})$, and $\eta\in(0,4)$.}  Each agent $k$ updates its cooperative estimate of the mean reward at each arm using the distributed cooperative estimation algorithm described in~\eqref{eqnmean}, \eqref{nhatdefn}, and~\eqref{shatdefn}. Note that the heuristic $Q_i^k$ requires the agent $k$ to know $\epsilon_c^k$, which depends on the global graph structure. This requirement can be relaxed by replacing $\epsilon_c^k$ with an increasing sub-logarithmic function of time. We leave rigorous analysis of the alternative policy for future investigation.

%
\subsection{Regret Analysis of the Cooperative UCB Algorithm}\label{RegretAnalysis}

We now derive a bound on the expected cumulative group regret using the distributed cooperative UCB algorithm.  This bound recovers the upper bound given in \eqref{eqn:fusioncenterregret} within a constant factor.  The contribution of each agent to the group regret is a function of its location in the network.

\begin{theorem}[\bit{Regret of the Cooperative UCB Algorithm}]\label{thm:regret-coop-ucb}
For the cooperative UCB algorithm and the Gaussian multiarmed bandit problem the number of times a suboptimal arm $i$ is selected by all agents until time $T$ satisfies
{
\begin{align*}
\sum_{k=1}^M& \mathbb{E}[n_i^{k}(T)]  \leq  {\max \left\{ M, \bigg \lceil \!  M\epsilon_n \! \! + \!  \sum_{k=1}^M \! \frac{8 \sigma^2 \gamma (1+  \epsilon_c^k) \ln(T)}{M \Delta_i^2} \bigg \rceil \right \} } \\
&\quad +\frac{2M}{\lnn{1+\eta}} \left( \frac{1}{(\gamma-1)^2} + \frac{\lnn{(1+\epsilon_n)(1+\eta)}}{\gamma - 1}  + 2 \right)  \! 
\end{align*}
where $\eta>0$ and $\gamma > 1$.}


\end{theorem}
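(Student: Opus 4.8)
The plan is to adapt the classical UCB regret analysis of Auer \etal~\cite{PA-NCB-PF:02} to the consensus-based cooperative setting, substituting the estimation guarantees of Proposition~\ref{prop:coop-est} for the single-agent empirical-mean concentration. I would work with the total (per-unit-agent) selection count, noting that $\sum_{k=1}^M n_i^k(T) = M\, n_i^{\text{cent}}(T)$, and decompose each suboptimal selection. Concretely, if agent $k$ selects a suboptimal arm $i$ at time $t$, then $Q_i^k(t)\ge Q_{i^*}^k(t)$, and the standard argument shows that at least one of three events must hold: (a) $\hat\mu_{i^*}^k(t)\le m_{i^*}-C_{i^*}^k(t)$; (b) $\hat\mu_i^k(t)\ge m_i+C_i^k(t)$; (c) $\Delta_i<2C_i^k(t)$. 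Events (a) and (b) are the ``bad estimate'' events, controlled probabilistically; event (c) is the ``insufficient exploration'' event, controlled deterministically by counting.

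First I would pin down the threshold that rules out event (c). Squaring the definition~\eqref{Cdefn} gives $(2C_i^k(t))^2 = \frac{8\sigma_s^2\gamma(\hat n_i^k(t)+\epsilon_c^k)\ln t}{M(\hat n_i^k(t))^2}$; using $\ln t\le\ln T$ together with $\hat n_i^k(t)+\epsilon_c^k\le \hat n_i^k(t)(1+\epsilon_c^k)$ (valid once $\hat n_i^k(t)\ge 1$, which the initialization guarantees), event (c) becomes impossible as soon as $\hat n_i^k(t)\ge n_k^\star := \frac{8\sigma_s^2\gamma(1+\epsilon_c^k)}{M\Delta_i^2}\ln T$. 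This is exactly where the main logarithmic term originates. To turn this into a statement about the number of selections, I would invoke statement~(i) of Proposition~\ref{prop:coop-est}, namely $\hat n_i^k(t)\ge n_i^{\text{cent}}(t)-\epsilon_n$, so that a forced (event-(c)) selection by agent $k$ requires $n_i^{\text{cent}}(t)<n_k^\star+\epsilon_n$. Attributing a forced-exploration budget of $n_k^\star+\epsilon_n$ to each agent and summing, while accounting for the integrality of the counts through the ceiling, yields the deterministic contribution $\big\lceil M\epsilon_n+\sum_{k}\tfrac{8\sigma_s^2\gamma(1+\epsilon_c^k)}{M\Delta_i^2}\ln T\big\rceil$.

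Next I would bound the probabilities of (a) and (b). Conditioning on the entire arm-sampling history (the sequence of $\boldsymbol\xi_i(\tau)$), the estimator $\hat s_i^k(t)$ is a fixed linear combination of independent Gaussian rewards, so $\hat\mu_i^k(t)$ is Gaussian with mean $m_i$ (statement~(ii)) and variance bounded by $\frac{\sigma_s^2}{M}\frac{\hat n_i^k(t)+\epsilon_c^k}{(\hat n_i^k(t))^2}$ (statement~(iii)). The confidence radius in~\eqref{Cdefn} is calibrated precisely so that $C_i^k(t)^2/(2\,\mathrm{Var}[\hat\mu_i^k(t)])\ge\gamma\ln t$; the Gaussian tail bound then gives $\mathbb{P}(\text{(a)}),\,\mathbb{P}(\text{(b)})\le t^{-\gamma}$. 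Summing over $t\ge 1$ using $\sum_{t\ge1}t^{-\gamma}\le \frac{\gamma}{\gamma-1}$ (valid for $\gamma>1$) and over the $M$ agents produces the additive constant term $\frac{M\gamma}{\gamma-1}$ (the two estimate events must be combined carefully to avoid an extra factor of two). Adding the deterministic and probabilistic contributions gives the claimed bound.

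I expect the counting of the forced-exploration selections to be the main obstacle, for two reasons. First, because $\hat n_i^k(t)$ estimates the \emph{global} per-agent count rather than agent $k$'s own tally, and because it need not be monotone under the consensus update~\eqref{nhatdefn}, care is required to turn the per-agent threshold $n_k^\star+\epsilon_n$ into a clean additive budget that sums to $\sum_k n_k^\star + M\epsilon_n$ rather than the cruder $M\max_k n_k^\star$. Second, the Gaussian tail step hides a measurability subtlety: $\hat n_i^k(t)$ sits in the denominator of both $\hat\mu_i^k(t)$ and its variance bound and is itself random and history-dependent, so the concentration inequality must be applied only after conditioning on the sampling pattern, and one must verify that the calibration $C_i^k(t)^2/(2\,\mathrm{Var})\ge\gamma\ln t$ survives this conditioning uniformly in the realized counts. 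The remaining manipulations are routine bookkeeping.
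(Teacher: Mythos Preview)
Your proposal tracks the paper's proof essentially step for step: the same three-event decomposition, the same use of Proposition~\ref{prop:coop-est}(ii)--(iii) with a Gaussian tail bound to get $\mathbb{P}(\text{(a)}),\mathbb{P}(\text{(b)})\le \tfrac{1}{2}t^{-\gamma}$ (the factor $\tfrac{1}{2}$ from the error-function bound is what prevents the extra factor of two you flag), and the same threshold computation via Proposition~\ref{prop:coop-est}(i) to dispose of event~(c). Your concern about aggregating the per-agent forced-exploration budgets into $\sum_k n_k^\star + M\epsilon_n$ rather than $M\max_k n_k^\star$ is precisely the step the paper handles most loosely---it simply sets $\eta$ equal to the summed quantity and proceeds---so you are being at least as careful as the original.
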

\begin{proof}
 We proceed similarly to~\cite{PA-NCB-PF:02}.  The number of times a suboptimal arm $i$ is selected by all agents until time $T$ is 
\begin{align}
\sum_{k=1}^M & n_i^k(T) = \sum_{k=1}^M\sum_{t=1}^T \mathds{1}(i^k(t) = i^k)\nonumber \\
& \leq  \sum_{k=1}^M \sum_{t=1}^T  \mathds{1}(Q_i^k(t-1) \geq Q_{i^*}^k(t-1))\nonumber \\
& \leq  A + \sum_{k=1}^M \sum_{t=1}^T \mathds{1}(Q_i^k(t-1) \geq Q_{i^*}^k(t-1), \nonumber \\
& \qquad \qquad \qquad \qquad \qquad \quad M n_i^{\text{cent}}(t-1) \geq A),  \label{suboptimal-samples}
\end{align}
where $A >0$ is a constant that will be chosen later.  

At a given time $t+1$ an individual agent $k$ will choose a suboptimal arm only if 
$ Q_i^k(t) \geq Q_{i^*}^k(t)$. 
For this condition to be true at least one of the following three conditions must hold:
\begin{align}
\hat{\mu}^k_{i^*}(t) &\leq m_{i^*} - C_{i^*}^k(t) \label{1stcond} \\
\hat{\mu}^k_{i}(t) &\geq m_{i} + C_{i}^k(t) \label{2ndcond} \\
m_{i^*} &< m_{i} + 2 C_{i}^k(t). \label{3rdcond} 
\end{align}


{
We now bound the probability that \eqref{2ndcond} holds. Applying Theorem \ref{Thm:EstDevBoudnsCondensed},  it follows 
{ for $t \ge N$ (i.e., after initialization) that}
	\begin{align*}
	\mathbb{P}\! \left( \hat{\mu}_{i}^k(t) \!\geq m_{i} \!+\! C_{i}^k(t) \right) & \!\!=\! \mathbb{P} \!\left(\! \frac{\hat{s}_{i}^k - m_{i} \hat{n}_{i}^k}{\sqrt{ \frac{1}{M} \left( \hat{n}_i^{k}(t) +  \epsilon_c^k\right) }} \! \geq \! \sigma \sqrt{\frac{2 \gamma \lnn{t}}{G(\eta)}} \right)\\
	& \leq \Bigg\lceil \frac{\lnn{t + \epsilon_n}}{\lnn{1+\eta}} \Bigg\rceil \expp{-\gamma \lnn{t}} \\
	& \leq \left(  \frac{\lnn{t + \epsilon_n}}{\lnn{1+\eta}} +1\right) \expp{-\gamma \lnn{t}} \\
	& = \left(  \frac{\lnn{t \frac{t+\epsilon_n}{t}}}{\lnn{1+\eta}} +1\right) \frac{1}{t^\gamma} \\
	& \leq \left(  \frac{\lnn{t (1+\epsilon_n)}}{\lnn{1+\eta}} +1\right) \frac{1}{t^\gamma} \\
	& =\left(  \frac{\lnn{t}}{\lnn{1+\eta}} + \frac{\lnn{ 1+\epsilon_n}}{\lnn{1+\eta}} +1\right) \frac{1}{t^\gamma}.
	\end{align*}
{

It follows analogously with a slight modification to Theorem \ref{Thm:EstDevBoudnsCondensed} that 
$$
\mathbb{P}(\eqref{1stcond} \text{ holds})   \leq \left(  \frac{\lnn{t}}{\lnn{1+\eta}} + \frac{\lnn{ 1+\epsilon_n}}{\lnn{1+\eta}} +1\right) \frac{1}{t^\gamma}.
$$
}
}	

Finally, we examine the probability that \eqref{3rdcond} holds.  
It follows that
\begin{align*}
m_{i^*} &< m_{i} + 2 C_{i}^k(t)\\
\implies n_i^{\text{cent}}(t) & < \left \lceil \epsilon_n + \frac{8 \sigma^2\gamma  (\hat{n}_i^{k}(t) +  \epsilon_c^k) \ln(t)}{M\Delta_i^2 (\hat{n}_i^{k}(t))^2} \right\rceil \\
& \le \left \lceil \epsilon_n + \frac{8 \sigma^2 \gamma (1+  \epsilon_c^k) \ln(t)}{M\Delta_i^2} \right\rceil .
\end{align*}
From monotonicity of $\ln(t)$, it follows that~\eqref{3rdcond} does not hold if $ n_i^{\text{cent}}(t) \ge  \big\lceil \epsilon_n + \frac{8 \sigma^2 \gamma (1+  \epsilon_c^k) \ln(T)}{M\Delta_i^2} \big\rceil$. 
%
{
Now, let {$A = \max \left\{ M, \lceil M \epsilon_n + \sum_{k=1}^M \frac{8 \sigma^2 \gamma (1+  \epsilon_c^k) \ln(T)}{M \Delta_i^2} \rceil \right\}$, where the first element of the set corresponds to the selection of each arm $i$ once by each player during the initialization and the second element ensures that~\eqref{3rdcond} does not hold.}
Then, it follows from~\eqref{suboptimal-samples} that
\begin{align*}
&\sum_{k=1}^M \mathbb{E}[n_i^{k}(T)]  \! \leq \! {\max \left\{ M, \bigg \lceil \!  M\epsilon_n \! \! + \!  \sum_{k=1}^M \! \frac{8 \sigma^2 \gamma (1+  \epsilon_c^k) \ln(T)}{M \Delta_i^2} \bigg \rceil \right \} }\! \!\\
& \quad  +  \! 2\sum_{k=1}^M \! {\sum_{t=N}^{T-1}} \left(  \frac{\lnn{t}}{\lnn{1+\eta}} + \frac{\lnn{ 1+\epsilon_n}}{\lnn{1+\eta}} +1\right) \frac{1}{t^\gamma} \\
& =  {\max \left\{ M, \bigg \lceil \!  M\epsilon_n \! \! + \!  \sum_{k=1}^M \! \frac{8 \sigma^2 \gamma (1+  \epsilon_c^k) \ln(T)}{M \Delta_i^2} \bigg \rceil \right \} }\\
&\quad + \frac{2M}{\lnn{1+\eta}} \left( \sum_{t=1}^T \frac{\lnn{(1+\epsilon_n)(1+\eta)}}{ t^\gamma} \! + \sum_{t=1}^T \frac{\lnn{t}}{t^\gamma}\!  \right) \\
& \leq  \bigg \lceil  M  \epsilon_n +  \sum_{k=1}^M \frac{8 \sigma^2 \gamma (1+  \epsilon_c^k) }{M \Delta_i^2} \ln(T) \bigg \rceil \\
&\quad + \frac{2M}{\lnn{1+\eta}} \left( \frac{1}{(\gamma-1)^2} +{ \frac{\gamma \lnn{(1+\epsilon_n)(1+\eta)}}{\gamma - 1}  + 1} \right)  .
\end{align*}
This establishes the proof. }

\end{proof}

\begin{remark}[\bit{Towards Explore-Exploit Centrality}] \label{remark:exploreexploitcent}
Theorem~\ref{thm:regret-coop-ucb} provides bounds on the performance of the group as a function of the graph structure.  However, the bound is dependent on the values of $\epsilon_c^k$ for each individual agent.  In this sense, $\varsigma^k \equiv 1/\epsilon_c^k$ can be thought of as a measure of node certainty in the context of explore-exploit problems. For $\epsilon_c^k =0$, the agent behaves like a centralized agent.  Higher values of $\epsilon_c^k$ reflect behavior of an agent with  sparser connectivity. Rigorous connections between $\epsilon_c^k$ and standard notions of network centralities~\cite{MN:10} is an interesting open problem that we leave for future investigation. \oprocend
\end{remark}

\section{Numerical Illustrations}
\label{NetworkPerformanceAnalysis}
In this section, we elucidate  our theoretical analyses from the previous sections with  numerical examples. We first demonstrate that the ordering on the performance of nodes obtained through numerical simulations is identical to the ordering  by our upper bounds. We then investigate the effect of connectivity on the performance of agents in random graphs. 

For all simulations below we consider a $10$-armed bandit problem with mean rewards as in { Table~\ref{RewardsTable}, $\sigma = 30$,  $\eta=0 $, and $\gamma =1$.}  \\
\begin{table}[ht]
\centering
 \caption{Rewards at each arm $i$ for simulations}
\begin{tabular}{ |c||c|c|c|c|c|c|c|c|c|c| }
\hline
$i$ &1&2&3&4&5&6&7&8&9&10\\
\hline
$m_i$ &40 &   50  &  50  &  60 &   70  &  70 &   80 &   90 &   92  &  95\\
 \hline
 \end{tabular}
 \label{RewardsTable}
\end{table}

%

\begin{example}[\bit{Regret on Fixed Graphs}]
Consider the set of agents communicating according to the graph in Fig.~\ref{5AgentGraphSim} and using the cooperative UCB algorithm to handle the explore-exploit tradeoff in the distributed cooperative MAB problem. 
The values of $\epsilon_c^k$ for nodes $1,2,3,$ and $4$ are $2.31, 2.31, 0,$ and $5.43$, respectively. As predicted by Remark~\ref{remark:exploreexploitcent}, agent $3$ should have the lowest regret, agents $1$ and $2$ should have equal and intermediate regret, and agent $4$ should have the highest regret. These predictions are validated in our simulations shown in Fig.~\ref{5AgentGraphSim}. The expected cumulative regret in our simulations is computed using $500$ Monte-Carlo runs. 
\end{example}


\begin{figure}
	\centering
	\begin{minipage}{.35\textwidth}
		\centering
		\includegraphics[width=.9\linewidth]{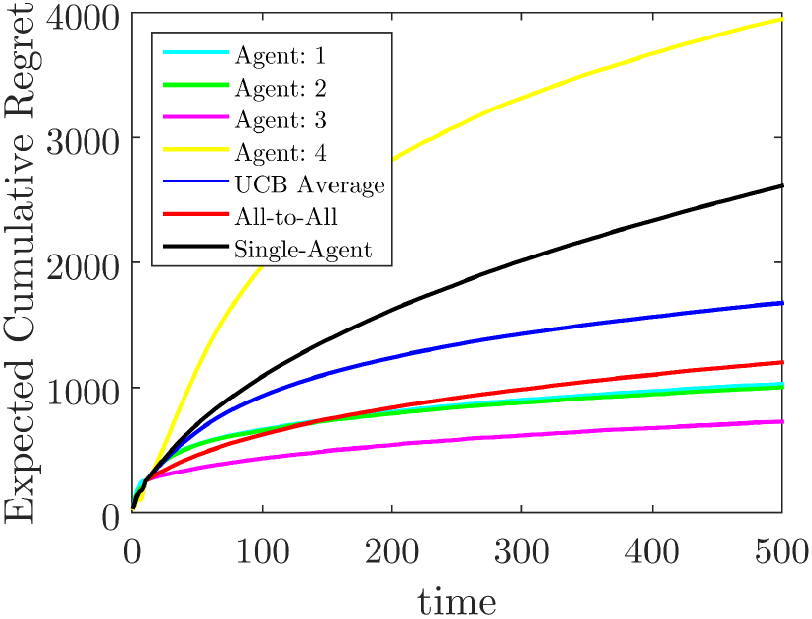}

	\end{minipage}%
	\begin{minipage}{.13\textwidth}
		\centering
		\includegraphics[width=.9\linewidth]{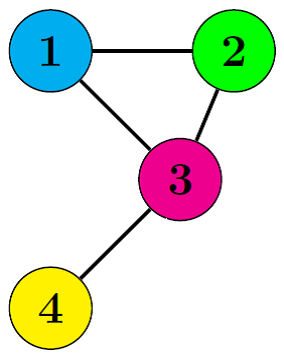}
%
%
%

	\end{minipage}
\caption{Simulation results comparing expected cumulative regret for different agents in the fixed network shown, using $P$ as in \eqref{Pdefn} and $\kappa = \frac{d_{\text{max}}}{d_{\text{max}}-1}$. Note that agents $1$ and $2$ have nearly identical expected regret.}
\label{5AgentGraphSim}
\end{figure}



We now explore the effect of $\epsilon_c^k$ on the performance of an agent in an Erd{\"o}s-R{\'e}yni random (ER) graph.  ER graphs are a widely used class of random graphs where any two agents are connected with a given probability $\rho$~\cite{bollobas1998random}.  

\begin{example}[\bit{Regret on Random Graphs}]
Consider a set of $10$ agents communicating according to an ER graph and using the cooperative UCB algorithm to handle the explore-exploit tradeoff in the aforementioned MAB problem.  
In our simulations, we consider $100$ connected ER graphs, and for each ER graph we compute the expected cumulative regret of agents using $30$ Monte-Carlo simulations. We show the behavior of the expected cumulative regret of each agent as a function of $\varsigma^k$ in  Fig.~\ref{ER_degcentsim}.

It is evident that
increased $\varsigma^k$ results in a sharp increase in performance.  Conversely, low $\varsigma^k$ is indicative of very poor performance.  This strong disparity is due to agents with lower $\varsigma^k$ doing a disproportionately high amount of exploration over the network, allowing other agents to exploit.  The disparity is also seen in Fig. \ref{5AgentGraphSim}, as agent $4$ does considerably worse than the others.  Additionally, as shown in Fig.~\ref{5AgentGraphSim} the expected cumulative regret averaged over all agents is higher than the centralized (all-to-all) case.
\end{example}

\section{Final Remarks}
\label{FinalRemarks}
Here we used the distributed multi-agent MAB problem to explore cooperative decision-making in networks.  We designed the cooperative UCB algorithm that achieves logarithmic regret for the group.  Additionally, we investigated the performance of individual agents in the network as a function of the graph topology. 
We derived a node certainty measure $\varsigma^k$ that predicts the relative performance of the agents.

Several directions of future research are of interest. First, the arm selection heuristic designed in this paper requires  some knowledge of  global parameters. Relaxation of this constraint will be addressed in a future publication. Another interesting direction is to study the tradeoff between the communication and the performance. Specifically, if agents do not communicate at each time but only intermittently, then it is of interest to characterize performance as a function of the communication frequency.

\begin{figure}[ht]
\includegraphics[width=0.35\textwidth]{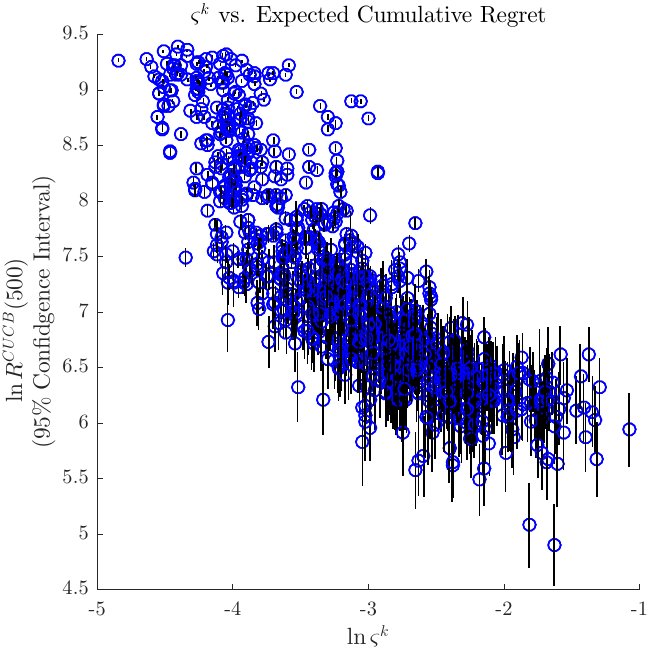}
\centering
\caption{Simulation results of expected cumulative regret as a function of $\varsigma^k$ for nodes in ER graphs with $\rho = \frac{\ln(10)}{10}$,  $P$ as in \eqref{Pdefn}, and $\kappa = \frac{d_{\text{max}}}{d_{\text{max}}-1}$.      }
\label{ER_degcentsim}
\end{figure}

\bibliographystyle{plain}
\bibliography{bibfile,bandits}

\end{document}